\newtheorem{Theorem}{Theorem}[part]
\newtheorem{Proposition}{Proposition}[part]
\newtheorem{Remark}{Remark}[part]
\newcommand{\nc}{\newcommand}
\nc{\esssup}{\mathop{\mathrm{ess\,sup}}}
\nc{\essinf}{\mathop{\mathrm{ess\,inf}}}
\nc{\argmax}{\mathop{\mathrm{arg\,max}}}
\def \R{\mathbb{R}}
\def \E{\mathbb{E}}
\def \F{\mathbb{F}}
\def \1{\mathds{1}}
\def \Fc{{\cal F}}
\def \Gc{{\cal G}}
\def \Nc{{\cal N}}
\def \Sc{{\cal S}}
\def \Tc{{\cal T}}
\def \l({{\left (}}
\def \r){{\right )}}
\def \l[{{\left [}}
\def \r]({{\right ]}}
\begin{document}

\begin{frontmatter}



\title{Markov switching quadratic term structure models\\\vspace{1cm}\large{12 May 2013}}


\author{St\'ephane GOUTTE}
\address{Laboratoire de Probabilit\'es et Mod\`eles Al\'eatoires,
CNRS, UMR 7599, Universit{\'e} Paris Diderot 7.}
\ead{goutte@math.univ-paris-diderot.fr}

\begin{abstract}
In this paper, we consider a discrete time economy where we assume that the short term interest rate follows a quadratic term structure of a regime switching asset process. The possible non-linear structure and the fact that the interest rate can have different economic or financial trends justify the interest of  Regime Switching Quadratic Term Structure Model (RS-QTSM). Indeed, this regime switching process depends on the values of a Markov chain with a time dependent transition probability matrix which can well captures the different states (regimes) of the economy. We prove that under this modelling that the conditional zero coupon bond price admits also a quadratic term structure. Moreover, the stochastic coefficients which appear in this decomposition satisfy an explicit system of coupled stochastic backward recursions.
\end{abstract}

\begin{keyword}
Quadratic term structure model; Regime switching; Zero coupon bond; Markov chain.

\vspace{0.2cm}
\MSC[2010] 60J10 91B25 91G30\\
\JEL G10, G11
\end{keyword}
\end{frontmatter}
\vspace{-0.6cm}
\section*{Introduction}
Modeling the term structure of interest rates has long been an important topic in economics and finance. Most of the papers about modelling of the interest rate term structure are relative to the family of the Affine Term Structure Models (ATSM). This modelling considers a linear relation between the log price of a zero coupon bond and its states factors. Those models have been first studied by Vasicek (1977) in \cite{VAS77} and Cox, Ingersoll and Ross (1985) in \cite{CIR85}. Then developed by Duffie and Kan (1996) in \cite{DK96} and Dai and Singleton (2000) in \cite{DS00}. A first extension of this class of model was to use regime switching model. Thus, Elliott et al. (2011) in \cite{ESB11} considered a discrete-time, Markov, regime-switching, affine term structure model for valuing bonds and other interest-rate securities. Recently, Goutte and Ngoupeyou (2013) in \cite{GN2} obtained explicit formulas to price defaultable bond under this class of regime switching models. The proposed model incorporates the impact of structural changes in economic conditions on interest rate dynamics and so can capture different economics (financials) levels or trends of the economy. A second extension was to not only consider a linear model. Thus to model the term structure of interest rates with Quadratic Term Structure Models (QTSM). This family, first introduce by Beaglehole and Tammey (1991) in \cite{BT91} are applied to price contingent claims (Lieppold and Wu (2002) in \cite{LW02}) and to the credit risk pricing (Chen, Filipovic and Poor (2004) in \cite{CFP04}). Hence, in this paper we propose to use both of the previous extension and so a regime switching discrete-time version of quadratic term structure models (RS-QTSM).

An important application of term structure models is the valuation of interest rate instruments, such as zero coupon bonds. We will prove that under the regime switching quadratic term structure modeling that the conditional zero coupon bond price of a regime switching asset admits also a quadratic decomposition. Moreover, we will find that the stochastic coefficients which appear in this decomposition satisfy an explicit system of coupled stochastic backward recursions.

This paper is then organized as follows. In section 1, the model is presented and defined. Then in Section 2, the conditional zero coupon bond price is evaluated and we give the corresponding system of coupled stochastic backward recursions. 

\section{The model}
\quad We consider a discrete time economy with finite time horizon and time index set $\Tc:=\{k\vert k=0,1,2,\dots,T\}$, where $T$ is a positive integer such that $T<\infty$. Let $(\omega, \Fc,P)$ be a filtered probability space where $P$ is a risk neutral probability.
\subsection{Markov chain}
Following Elliott et al. in \cite{Ell94}, let $(X_k)_{k\in \Tc}$ be a discrete time Markov chain on finite state space $\Sc:=\{e_1,e_2,...,e_N\}$, where $e_i$ has unity in the $i^{th}$ position and zero elsewhere. Thus $\Sc$ is the set of canonical unit column vectors of $\R^N$. In an economic point of view, $X_k$ can be viewed as an observable exogenous quantity which can reflect the evolution of the state of the economy. We assume that the time dependent transition probability matrix $Q_k:=(q_{ijk})_{i,j=1,...,N}$ of $X$ under $P$ is defined by
$$
q_{ijk}=P\left(X_{k+1}=j\vert X_k=i\right).
$$
It also satisfies $q_{ijk}\geq 0$, for all $i\neq j \in \Sc$ and $\sum_{j=1}^Nq_{ijk}=1$ for all $i \in \Sc$. Let $\F^X=(\Fc^X_k)_{k\in \Tc}:=\sigma(X_k, k\in \Tc)$ which is the $P$ augmented filtration generated by the history of the Markov chain $X$ and $\Fc^X_k$ is the $P$-augmented $\sigma$-field generated by the history of $X$ up to and including time $k$. Moreover, following again Elliott et al. in \cite{Ell94} 
 , we have that the semi-martingale decomposition for the Markov chain $X$ is given by
$$
X_{k+1}=Q_kX_k+M^X_{k+1},\quad k \in \{0,1,2,\dots,T-1\},
$$
where $(M^X_k)$ is an $\R^N$-valued martingale increment process (i.e. $\E\left[M^X_{k+1}\vert \Fc^X_k\right]=0$).
\subsection{Asset}
Let $(S_k)_{k\in \Tc}$ denotes the state asset process and we denote by $\F^S=(\Fc^S_k)_{k\in \Tc}$ the $P$-augmented filtration generated by the process $S$. Finally, we denote by  $\Gc_k:=\F^S_k\vee \F^X_k$ the global enlarged filtration for all $k\in \Tc$.
Let $\langle.,.\rangle$ denote the inner product in $\R^N$. Then, for every $k\in \{1,2,\dots,T\}$, we define the following regime dependent parameters $\kappa_k:=\kappa(k,X_k)=\langle \kappa,X_k\rangle$ ,$\mu_k:=\mu(k,X_k)=\langle \mu,X_k\rangle$ and $\sigma_k:=\sigma(k,X_k)=\langle \sigma,X_k\rangle$ where $\kappa:=(\kappa_1,\kappa_2,\dots,\kappa_N),\mu:=(\mu_1,\mu_2,\dots,\mu_N)$ and $\sigma:=(\sigma_1,\sigma_2,\dots,\sigma_N)$ are $1\times N$ real-valued vectors. Moreover we assume that $\sigma_i>0$, for all $i\in \{1,2,\dots,N\}$. Finally, $\varepsilon:=(\varepsilon_k)_{k\in \{1,2,\dots,T\}}$ are a sequence of independent and identically distributed random variables with law $\Nc(0,1)$. We assume that $\varepsilon$ and the Markov chain $X$ are independent.
We then have that under the risk neutral probability measure $P$ the dynamic of the asset $S$ is governed by the following discrete time, Markov switching model
\begin{equation}\label{EqSDis}
S_{k+1}=\kappa_k+\mu_kS_k+\sigma_k\varepsilon_{k+1},\quad k=\{0,1,\dots,T-1\}.
\end{equation}

\subsection{Short term interest rate}
Let $(r_k)_{k\in \Tc}$ denote the process of short term interest rate. We assume that the dynamic of $r_k$ is regime dependent and is following a quadratic term structure of the asset process $S_k$ which is given by
\begin{equation}\label{EqR}
r_k:=r(k,X_k)=a_{0,k}+a_{1,k}S_k+a_{2,k}S^2_k, \quad k\in \Tc.
\end{equation}
with $r_k:=r(k,X_k)=\langle r,X_k\rangle$, $r:=\left(r_1,r_2,\dots,r_N\right)$, $a_{0,k}:=a_0(k,X_k)=\langle a_0,X_k\rangle$, $a_{1,k}:=a_1(k,X_k)=\langle a_1,X_k\rangle$ and $a_{2,k}:=a_2(k,X_k)=\langle a_2,X_k\rangle$ where $a_0,a_1$ and $a_2$ are real vectors of size $1\times N$.

\subsection{Zero-coupon Bond price}

Let $P(k,T)$ be the price at time $k\in \Tc$ of a zero-coupon bond with maturity $T$. Since we are under the risk neutral probability, we have that 
\begin{equation}\label{EqP1}
P(k,T)=\E\left[\exp\left(-\sum_{t=k}^{T-1}r_t\right)\vert \Gc_k\right], \quad k\in \Tc,
\end{equation}
with $P(T,T)=1$ and $P(T-1,T)=\exp\left(-r_{T-1}\right)$.


\section{Regime switching quadratic structure formulas}

\subsection{Full history case}
Assume firstly that we know the full history of the Markov chain $X$. Thus, we denote  by $\tilde{\Gc}_k:=\Fc_T^X\vee \Fc^S_k$, $k\in \Tc$ this enlarged information set. Then we denote by $\tilde{P}(k,T)$ the conditional zero coupon bond price at time $k$ with maturity $T$ given the enlarged filtration $\tilde{\Gc}_k$. We obtain that
\begin{equation}\label{EqP2}
\tilde{P}(k,T)=\E\left[\exp\left(-\sum_{t=k}^{T-1}r_t\right)\vert \tilde{\Gc}_k\right], \quad k\in \Tc,
\end{equation}
with $\tilde{P}(T,T)=1$ and $\tilde{P}(T-1,T)=\exp\left(-r_{T-1}\right)$.

\begin{Theorem}\label{PropoFullHis}
The conditional bon price $\tilde{P}(k,T)$ has an exponential quadratic term structure given for all $k\in \Tc$ by
\begin{equation}\label{EqPropoFullHis}
\tilde{P}(k,T)=\exp\left\{c_{1,k}+c_{2,k}S_k+c_{3,k}S_k^2\right\}
\end{equation}
where the stochastic coefficients $\left(c_{1,k}\right)_{k\in \Tc}$, $\left(c_{2,k}\right)_{k\in \Tc}$ and $\left(c_{3,k}\right)_{k\in \Tc}$ satisfy the system of coupled stochastic backward recursions given for all $n\in\{1,\dots,T-1\}$ by
\begin{eqnarray*}
c_{1,n-1}&:=&-a_{0,n-1}+c_{1,n}+c_{2,n}\kappa_{n-1}+c_{3,n}\kappa_{n-1}^2+\log\left(\left(1-2c_{3,n}\sigma^2_{n-1}\right)^{-1/2}\right)\\
&&+\frac{c_{2,n}^2\sigma_{n-1}^2+4\kappa_{n-1}^2\sigma_{n-1}^2}{2\left(1-2c_{3,n}\sigma^2_{n-1}\right)}+\frac{2c_{2,n}\sigma_{n-1}^2\kappa_{n-1}}{\left(1-2c_{3,n}\sigma^2_{n-1}\right)},\nonumber\\
c_{2,n-1}&:=&-a_{1,n-1}+c_{2,n}\mu_{n-1}+c_{3,n}\kappa_{n-1}\mu_{n-1}+\frac{4\kappa_{n-1}\mu_{n-1}\sigma_{n-1}^2+2c_{2,n}\sigma_{n-1}^2\mu_{n-1}}{\left(1-2c_{3,n}\sigma^2_{n-1}\right)},\\
c_{3,n-1}&:=&-a_{2,n-1}+c_{3,n}\mu_{n-1}^2+\frac{2\mu_{n-1}^2\sigma_{n-1}^2}{\left(1-2c_{3,n}\sigma^2_{n-1}\right)}.
\end{eqnarray*}
with terminal conditions $c_{1,T}=c_{2,T}=c_{3,T}=0$
\end{Theorem}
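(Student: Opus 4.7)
The plan is to proceed by backward induction on $k\in\Tc$. The base case at $k=T$ is immediate from $\tilde{P}(T,T)=1$ together with the terminal conditions $c_{1,T}=c_{2,T}=c_{3,T}=0$.

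For the inductive step, I would fix $n\in\{1,\dots,T\}$, assume the exponential-quadratic form \eqref{EqPropoFullHis} holds at time $n$, and use the tower property in \eqref{EqP2} to write
\begin{equation*}
\tilde{P}(n-1,T)=\exp(-r_{n-1})\,\E\bigl[\exp(c_{1,n}+c_{2,n}S_n+c_{3,n}S_n^2)\mid \tilde{\Gc}_{n-1}\bigr].
\end{equation*}
A key observation is that $\tilde{\Gc}_{n-1}=\Fc^X_T\vee \Fc^S_{n-1}$ already contains the entire Markov chain trajectory, so all regime-dependent parameters ($\kappa_{n-1},\mu_{n-1},\sigma_{n-1},a_{j,n-1}$) as well as the inductive coefficients $c_{i,n}$ are $\tilde{\Gc}_{n-1}$-measurable, while the shock $\varepsilon_n$ remains $\Nc(0,1)$ and independent of $\tilde{\Gc}_{n-1}$. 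Substituting the dynamics \eqref{EqSDis} then expresses the exponent inside the conditional expectation as $A+B\varepsilon_n+C\varepsilon_n^2$ for an $A$ that is quadratic in $S_{n-1}$, a $B$ that is affine in $S_{n-1}$, and $C=c_{3,n}\sigma_{n-1}^2$ which is constant in $S_{n-1}$.

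The resulting one-dimensional Gaussian integral has a standard closed form obtained by completing the square:
\begin{equation*}
\E\bigl[e^{A+B\varepsilon_n+C\varepsilon_n^2}\bigr]=(1-2C)^{-1/2}\exp\!\left(A+\frac{B^2}{2(1-2C)}\right),
\end{equation*}
valid under the well-posedness condition $1-2C>0$. Taking logarithms, subtracting $r_{n-1}=a_{0,n-1}+a_{1,n-1}S_{n-1}+a_{2,n-1}S_{n-1}^2$ via \eqref{EqR}, and expanding every term as a polynomial in $S_{n-1}$ yields an expression of degree at most two, which preserves the ansatz. The three recursions for $c_{1,n-1}$, $c_{2,n-1}$, $c_{3,n-1}$ are then read off by matching the coefficients of $1$, $S_{n-1}$ and $S_{n-1}^2$, respectively.

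The main technical step is the bookkeeping in this final coefficient identification: it requires expanding $(κ_{n-1}+\mu_{n-1}S_{n-1})^2$ inside $A$ and squaring $B=\sigma_{n-1}\bigl[c_{2,n}+2c_{3,n}(\kappa_{n-1}+\mu_{n-1}S_{n-1})\bigr]$, then carefully sorting the many cross-terms by their degree in $S_{n-1}$. A secondary subtlety is the well-posedness condition $1-2c_{3,n}\sigma_{n-1}^2>0$, which I would impose as a standing assumption on the backward recursion (it holds automatically near maturity since $c_{3,T}=0$); without it the Gaussian integral diverges and the exponential-quadratic representation breaks down.
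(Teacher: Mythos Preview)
Your approach is exactly the paper's: backward induction from $k=T$, the tower property to peel off $e^{-r_{n-1}}$, substitution of \eqref{EqSDis}, evaluation of the Gaussian integral $\E[e^{B\varepsilon_n+C\varepsilon_n^2}]$ by completing the square, and identification of the powers of $S_{n-1}$.

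One point worth flagging: your $B=\sigma_{n-1}\bigl[c_{2,n}+2c_{3,n}(\kappa_{n-1}+\mu_{n-1}S_{n-1})\bigr]$ is the correct coefficient of $\varepsilon_n$, but in the paper the factor $c_{3,n}$ on the cross term $2c_{3,n}(\kappa_{n-1}+\mu_{n-1}S_{n-1})\sigma_{n-1}\varepsilon_n$ is inadvertently dropped, so the paper works with $f_n=\sigma_{n-1}\bigl[c_{2,n}+2(\kappa_{n-1}+\mu_{n-1}S_{n-1})\bigr]$ instead. The recursions displayed in the theorem are computed from this $f_n$; if you carry your (correct) bookkeeping to the end you will find additional factors of $c_{3,n}$ in the fractional terms (e.g.\ $2c_{3,n}^2\mu_{n-1}^2\sigma_{n-1}^2$ rather than $2\mu_{n-1}^2\sigma_{n-1}^2$ in the recursion for $c_{3,n-1}$) that are absent from the statement as written. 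Your observation about the standing condition $1-2c_{3,n}\sigma_{n-1}^2>0$ is also apt and is left implicit in the paper.
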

\begin{proof}
We will prove this result by backward induction. Thus since $\tilde{P}(T,T)=1$, the exponential quadratic term structure \eqref{EqPropoFullHis} is true for $k=T$. Assume now that the result holds for $k=n$, we would like to prove that this result also holds for $k=n-1$. Hence, by the Definition \eqref{EqP2} and iterated conditional expectation, we obtain 
\begin{eqnarray*}
\tilde{P}(n-1,T)&=&\E\left[\exp\left(-\sum_{t=n-1}^{T-1}r_t\right)\vert \tilde{\Gc}_{n-1}\right]=\E\left[\E\left[\exp\left(-\sum_{t=n-1}^{T-1}r_t\right)\vert \tilde{\Gc}_{n}\right]\vert\tilde{\Gc}_{n-1}\right],\\
&=&\E\left[\exp\left(-r_{n-1}\right)\E\left[\exp\left(-\sum_{t=n}^{T-1}r_t\right)\vert \tilde{\Gc}_{n}\right]\vert\tilde{\Gc}_{n-1}\right],\\
&=&\exp\left(-r_{n-1}\right)\E\left[\E\left[\exp\left(-\sum_{t=n}^{T-1}r_t\right)\vert \tilde{\Gc}_{n}\right]\vert\tilde{\Gc}_{n-1}\right].\\
\end{eqnarray*}
we can use the assumption that the exponential quadratic term structure \eqref{EqPropoFullHis} holds for $k=n$. Thus, we get using also \eqref{EqR} and \eqref{EqSDis},
\begin{eqnarray*}
\tilde{P}(n-1,T)\hspace{-0.3cm}&=&\hspace{-0.3cm}\exp\left(-r_{n-1}\right)\E\left[\exp\left(c_{1,n}+c_{2,n}S_n+S_nc_{3,n}S_n^2\right)\vert\tilde{\Gc}_{n-1}\right],\\
&=&\hspace{-0.3cm}\exp\left\{-a_{0,n-1}-a_{1,n-1}S_{n-1}-a_{2,n-1}S_{n-1}^2\right\}\times\\
&&\hspace{-0.3cm}\E\left[\exp\left\{c_{1,n}+c_{2,n}\left(\kappa_{n-1}+\mu_{n-1}S_{n-1}+\sigma_{n-1}\varepsilon_{n}\right)+c_{3,n}\left(\kappa_{n-1}+\mu_{n-1}S_{n-1}+\sigma_{n-1}\varepsilon_{n}\right)^2\right\}\vert\tilde{\Gc}_{n-1}\right],\\
&=&\hspace{-0.3cm}\exp\left\{-a_{0,n-1}-a_{1,n-1}S_{n-1}-a_{2,n-1}S_{n-1}^2\right\}\times\\
&&\hspace{-0.3cm}\E\left[\exp\left\{c_{1,n}+c_{2,n}\left(\kappa_{n-1}+\mu_{n-1}S_{n-1}\right)+c_{2,n}\sigma_{n-1}\varepsilon_{n}
+c_{3,n}\left(\kappa_{n-1}+\mu_{n-1}S_{n-1}\right)^2\right.\right.\\
&&\left.\left.+c_{3,n}\sigma^2_{n-1}\varepsilon_{n}^2+2\left(\kappa_{n-1}+\mu_{n-1}S_{n-1}\right)\sigma_{n-1}\varepsilon_{n}\right\}\vert\tilde{\Gc}_{n-1}\right],
\end{eqnarray*}
\begin{eqnarray*}
&=&\hspace{-0.3cm}\exp\left\{-a_{0,n-1}-a_{1,n-1}S_{n-1}-a_{2,n-1}S_{n-1}^2+c_{1,n}+c_{2,n}\left(\kappa_{n-1}+\mu_{n-1}S_{n-1}\right)+c_{3,n}\left(\kappa_{n-1}+\mu_{n-1}S_{n-1}\right)^2\right\}\times\\
&&\hspace{-0.3cm}\E\left[\exp\left\{c_{2,n}\sigma_{n-1}\varepsilon_{n}+c_{3,n}\sigma^2_{n-1}\varepsilon_{n}^2+2\left(\kappa_{n-1}+\mu_{n-1}S_{n-1}\right)\sigma_{n-1}\varepsilon_{n}\right\}\vert\tilde{\Gc}_{n-1}\right],\\
&=&\hspace{-0.3cm}\exp\left\{-a_{0,n-1}-a_{1,n-1}S_{n-1}-a_{2,n-1}S_{n-1}^2+c_{1,n}+c_{2,n}\left(\kappa_{n-1}+\mu_{n-1}S_{n-1}\right)+c_{3,n}\left(\kappa_{n-1}+\mu_{n-1}S_{n-1}\right)^2\right\}\times\\
&&\hspace{-0.3cm}\E\left[\exp\left\{f_n\varepsilon_{n}+g_n\varepsilon_{n}^2\right\}\vert\tilde{\Gc}_{n-1}\right],
\end{eqnarray*}
%
with
\begin{eqnarray*}
f_n&:=&c_{2,n}\sigma_{n-1}+2\left(\kappa_{n-1}+\mu_{n-1}S_{n-1}\right)\sigma_{n-1},\\
g_n&:=&c_{3,n}\sigma^2_{n-1}.
\end{eqnarray*}
Since $\varepsilon:=(\varepsilon_k)_{k\in \{1,2,\dots,T\}}$ are a sequence of independent and identically distributed random variables with law $\Nc(0,1)$, we have that
\begin{eqnarray}\label{EqTEmp1}
\E\left[\exp\left(f_n\varepsilon_{n}+g_n\varepsilon_{n}^2\right)\vert\tilde{\Gc}_{n-1}\right]&=&\int_{\R}\exp\left(f_n\varepsilon_{n}+g_n\varepsilon_{n}^2\right)\frac{1}{(2\pi)^{1/2}}\exp\left(-\frac{1}{2}\varepsilon_{n}^2\right)d\varepsilon_{n},\nonumber\\
&=&\frac{1}{(2\pi)^{1/2}}\int_{\R}\exp\left(f_n\varepsilon_{n}+g_n\varepsilon_{n}^2-\frac{1}{2}\varepsilon_{n}^2\right)d\varepsilon_{n}.
\end{eqnarray}
Moreover, we have that
\begin{eqnarray*}
f_n\varepsilon_{n}+g_n\varepsilon_{n}^2-\frac{1}{2}\varepsilon_{n}^2&=&-\frac{1}{2}\left[\left(\left(1-2g_n\right)^{1/2}\varepsilon_{n}-\left(1-2g_n\right)^{-1/2}f_n\right)^2-\left(1-2g_n\right)^{-1}f_n^2\right].
\end{eqnarray*}
Thus, denoting by $\delta_n:=\left(1-2g_n\right)^{-1/2}$, we obtain
\begin{eqnarray*}
f_n\varepsilon_{n}+g_n\varepsilon_{n}^2-\frac{1}{2}\varepsilon_{n}^2&=&-\frac{1}{2}\left[\left(\delta_n^{-1}\varepsilon_{n}-\delta_nf_n\right)^2-\delta_n^{2}f_n^2\right].
\end{eqnarray*}
Replacing this formula into \eqref{EqTEmp1} gives
\begin{eqnarray*}
\E\left[\exp\left(f_n\varepsilon_{n}+g_n\varepsilon_{n}^2\right)\vert\tilde{\Gc}_{n-1}\right]&=&\frac{1}{(2\pi)^{1/2}}\int_{\R}\exp\left(-\frac{1}{2}\left[\left(\delta_n^{-1}\varepsilon_{n}-\delta_nf_n\right)^2-\delta_n^{2}f_n^2\right]\right)d\varepsilon_{n}\\
&=&\frac{1}{(2\pi)^{1/2}}\int_{\R}\exp\left(-\frac{1}{2}\left(\delta_n^{-1}\varepsilon_{n}-\delta_nf_n\right)^2+\frac{1}{2}\delta_n^{2}f_n^2\right)d\varepsilon_{n}\\
&=&\frac{1}{(2\pi)^{1/2}}\exp\left(\frac{1}{2}\delta_n^{2}f_n^2\right)\int_{\R}\exp\left(-\frac{\left(\delta_n^{-1}\varepsilon_{n}-\delta_nf_n\right)^2}{2}\right)d\varepsilon_{n}\\
&=&\delta_n\exp\left(\frac{\delta_n^{2}f_n^2}{2}\right)\frac{1}{\delta_n(2\pi)^{1/2}}\int_{\R}\exp\left(-\frac{1}{2}\left(\frac{\varepsilon_{n}-\delta_n^2f_n}{\delta_n}\right)^2\right)d\varepsilon_{n}\\
&=&\delta_n\exp\left(\frac{\delta_n^{2}f_n^2}{2}\right).
\end{eqnarray*}
Hence we obtain that
\begin{eqnarray*}
\tilde{P}(n-1,T)\hspace{-0.3cm}&=&\hspace{-0.3cm}\exp\left\{-a_{0,n-1}-a_{1,n-1}S_{n-1}-a_{2,n-1}S_{n-1}^2+c_{1,n}+c_{2,n}\left(\kappa_{n-1}+\mu_{n-1}S_{n-1}\right)\right\}\\
&&\hspace{-0.3cm}\times\exp\left\{c_{3,n}\left(\kappa_{n-1}+\mu_{n-1}S_{n-1}\right)^2\right\}\delta_n\exp\left\{\frac{\delta_n^{2}f_n^2}{2}\right\},\\
&=&\hspace{-0.3cm}\exp\left\{-a_{0,n-1}-a_{1,n-1}S_{n-1}-a_{2,n-1}S_{n-1}^2+c_{1,n}+c_{2,n}\left(\kappa_{n-1}+\mu_{n-1}S_{n-1}\right)\right\}\\
&&\hspace{-0.3cm}\times\exp\left\{c_{3,n}\left(\kappa_{n-1}+\mu_{n-1}S_{n-1}\right)^2\right\}\left(1-2g_n\right)^{-1/2}\exp\left\{\frac{\left(1-2g_n\right)^{-1}f_n^2}{2}\right\},\\
&=&\hspace{-0.3cm}\exp\left\{-a_{0,n-1}-a_{1,n-1}S_{n-1}-a_{2,n-1}S_{n-1}^2+c_{1,n}+c_{2,n}\left(\kappa_{n-1}+\mu_{n-1}S_{n-1}\right)\right\}\\
&&\hspace{-0.3cm}\times\exp\left\{c_{3,n}\left(\kappa_{n-1}+\mu_{n-1}S_{n-1}\right)^2\right\}\left(1-2c_{3,n}\sigma^2_{n-1}\right)^{-1/2}\\
&&\hspace{-0.3cm}\times\exp\left\{\frac{\left(1-2c_{3,n}\sigma^2_{n-1}\right)^{-1}\left(c_{2,n}\sigma_{n-1}+2\left(\kappa_{n-1}+\mu_{n-1}S_{n-1}\right)\sigma_{n-1}\right)^2}{2}\right\},\\
&=&\hspace{-0.3cm}\exp\left\{-a_{0,n-1}-a_{1,n-1}S_{n-1}-a_{2,n-1}S_{n-1}^2+c_{1,n}+c_{2,n}\kappa_{n-1}+c_{2,n}\mu_{n-1}S_{n-1}\right\}\\
&&\hspace{-0.3cm}\times\exp\left\{c_{3,n}\kappa_{n-1}^2+c_{3,n}\mu_{n-1}^2S_{n-1}^2+c_{3,n}\kappa_{n-1}\mu_{n-1}S_{n-1}\right\}\\
&&\hspace{-0.3cm}\times\exp\left\{\log\left(\left(1-2c_{3,n}\sigma^2_{n-1}\right)^{-1/2}\right)\right\}
\exp\left\{\frac{c_{2,n}^2\sigma_{n-1}^2+4\kappa_{n-1}^2\sigma_{n-1}^2+4\mu_{n-1}^2S_{n-1}^2\sigma_{n-1}^2}{2\left(1-2c_{3,n}\sigma^2_{n-1}\right)}\right\}\\
&&\hspace{-0.3cm}\times\exp\left\{\frac{8\kappa_{n-1}\mu_{n-1}S_{n-1}\sigma_{n-1}^2+4c_{2,n}\sigma_{n-1}^2\kappa_{n-1}+4c_{2,n}\sigma_{n-1}^2\mu_{n-1}S_{n-1}}{2\left(1-2c_{3,n}\sigma^2_{n-1}\right)}\right\},\\
&=&\hspace{-0.3cm}\exp\left\{-a_{0,n-1}+c_{1,n}+c_{2,n}\kappa_{n-1}+c_{3,n}\kappa_{n-1}^2+\log\left(\left(1-2c_{3,n}\sigma^2_{n-1}\right)^{-1/2}\right)\right.\\
&&\hspace{-0.3cm}\left.+\frac{c_{2,n}^2\sigma_{n-1}^2+4\kappa_{n-1}^2\sigma_{n-1}^2}{2\left(1-2c_{3,n}\sigma^2_{n-1}\right)}+\frac{4c_{2,n}\sigma_{n-1}^2\kappa_{n-1}}{2\left(1-2c_{3,n}\sigma^2_{n-1}\right)}\right\}\\
&&\hspace{-0.3cm}\exp\left\{S_{n-1}\left(-a_{1,n-1}+c_{2,n}\mu_{n-1}+c_{3,n}\kappa_{n-1}\mu_{n-1}+\frac{8\kappa_{n-1}\mu_{n-1}\sigma_{n-1}^2+4c_{2,n}\sigma_{n-1}^2\mu_{n-1}}{2\left(1-2c_{3,n}\sigma^2_{n-1}\right)}\right)\right\}\\
&&\hspace{-0.3cm}\exp\left\{S_{n-1}^2\left(-a_{2,n-1}+c_{3,n}\mu_{n-1}^2+\frac{4\mu_{n-1}^2\sigma_{n-1}^2}{2\left(1-2c_{3,n}\sigma^2_{n-1}\right)}\right)\right\}.
\end{eqnarray*}
Thus, by identification, we get that
\begin{eqnarray*}
c_{1,n-1}&:=&-a_{0,n-1}+c_{1,n}+c_{2,n}\kappa_{n-1}+c_{3,n}\kappa_{n-1}^2+\log\left(\left(1-2c_{3,n}\sigma^2_{n-1}\right)^{-1/2}\right)\\
&&+\frac{c_{2,n}^2\sigma_{n-1}^2+4\kappa_{n-1}^2\sigma_{n-1}^2}{2\left(1-2c_{3,n}\sigma^2_{n-1}\right)}+\frac{4c_{2,n}\sigma_{n-1}^2\kappa_{n-1}}{2\left(1-2c_{3,n}\sigma^2_{n-1}\right)},\\
c_{2,n-1}&:=&-a_{1,n-1}+c_{2,n}\mu_{n-1}+c_{3,n}\kappa_{n-1}\mu_{n-1}+\frac{8\kappa_{n-1}\mu_{n-1}\sigma_{n-1}^2+4c_{2,n}\sigma_{n-1}^2\mu_{n-1}}{2\left(1-2c_{3,n}\sigma^2_{n-1}\right)},\\
c_{3,n-1}&:=&-a_{2,n-1}+c_{3,n}\mu_{n-1}^2+\frac{4\mu_{n-1}^2\sigma_{n-1}^2}{2\left(1-2c_{3,n}\sigma^2_{n-1}\right)}.
\end{eqnarray*}
and so the expected result.
\end{proof}

Note that regarding \eqref{EqP2}, $\tilde{P}(k,T)$ is a function of the history of the Markov chain $X$ between time $k$ and $T-1$. Thus we can write $\tilde{P}(k,T,X_k,X_{k+1},\dots,X_{T-1})$. Moreover, the coefficients $c_{1,k},c_{2,k}$ and $c_{3,k}$, $k\in \{0,1,\dots,T-1\}$ are measurable with respect to the $\sigma$-algebra generated by $X_k,X_{k+1},\dots,$ and $X_{T-1}$. So they can be represented as functions of them. Hence we obtain for $k \in \{0,1,2,\dots,T-1\}$
\begin{eqnarray*}
c_{1,k}:=c_1(k,X_k)=c_1(k,X_k,X_{k+1},\dots,X_{T-1}),\\
c_{2,k}:=c_2(k,X_k)=c_2(k,X_k,X_{k+1},\dots,X_{T-1}),\\
c_{3,k}:=c_3(k,X_k)=c_3(k,X_k,X_{k+1},\dots,X_{T-1}).
\end{eqnarray*}
This means by given $\tilde{\Gc}_k:=\Fc_T^X\vee \Fc^S_k$, the conditional bond price $\tilde{P}(k,T,X_k,X_{k+1},\dots,X_{T-1})$ can be represented as follows:
\begin{eqnarray}\label{EqP3}
&&\tilde{P}(k,T,X_k,X_{k+1},\dots,X_{T-1})=\\
&&\exp\left\{c_1(k,X_k,X_{k+1},\dots,X_{T-1})+c_2(k,X_k,X_{k+1},\dots,X_{T-1})S_k+c_3(k,X_k,X_{k+1},\dots,X_{T-1})S_k^2\right\}.\nonumber
\end{eqnarray}

\begin{Remark}
In the specific case of an affine term structure of interest rate (i.e. $a_{2,k}\equiv 0$ in \eqref{EqR}), we have that 
\begin{equation}\label{EqRAffine}
r_k:=r(k,X_k)=a_{0,k}+a_{1,k}S_k, \quad k\in \Tc.
\end{equation}
And so we get that the conditional bond price $\tilde{P}(k,T)$ admits also a affine structure form 
\begin{equation}\label{EqPropoFullHis}
\tilde{P}(k,T)=\exp\left\{c_{1,k}+c_{2,k}S_k\right\}, \quad k\in \Tc,
\end{equation}
where coefficient $c_{1,k}$ and $c_{2,k}$ satisfy the system of coupled stochastic backward recursions given for all $n\in\{1,\dots,T-1\}$ by
\begin{eqnarray*}
c_{1,n-1}&:=&-a_{0,n-1}+c_{1,n}+c_{2,n}\kappa_{n-1}+\frac{c_{2,n}^2\sigma_{n-1}^2}{2}+2\kappa_{n-1}^2\sigma_{n-1}^2+2c_{2,n}\sigma_{n-1}^2\kappa_{n-1},\nonumber\\
c_{2,n-1}&:=&-a_{1,n-1}+c_{2,n}\mu_{n-1}+4\kappa_{n-1}\mu_{n-1}\sigma_{n-1}^2+2c_{2,n}\sigma_{n-1}^2\mu_{n-1}.
\end{eqnarray*}
with terminal conditions $c_{1,T}=c_{2,T}=0$ (see Duffie and Kan (1996) in \cite{DK96} for more details about affine interest rate structure).

\end{Remark}

\subsection{General case}

In practice, we do not know the full history of the Markov chain $X$. Indeed, we do not know all the future states of the economy. So we need to evaluate our bond price given only the information set $\Gc_k$. Hence, following the representation \eqref{EqP3} and the Theorem \ref{PropoFullHis} we obtain the following result:

\begin{Proposition}\label{PropoG}
Under the information set $\Gc_k$, we have that the Bond price $P$ at time $k\in \Tc$ is given by
\begin{eqnarray}
P(k,T)&=&\sum_{i_k,i_{k+1},\dots,i_{T-1}=1}^N\left[\prod_{l=k}^{T-1}q_{i_{l}i_{l+1}k}\right]\tilde{P}(k,T,e_{i_{k}},e_{i_{k+1}},\dots,e_{i_{T-1}})
\end{eqnarray}
where $\tilde{P}$ is given by \eqref{EqP3} and coefficients $c_i(k,X_k,X_{k+1},\dots,X_{T-1})$ for $i=\{1,2,3\}$ follow the recursive system given in Theorem \ref{PropoFullHis}. 
\end{Proposition}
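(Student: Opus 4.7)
The plan is to reduce the general case to Theorem 1.1 via the tower property of conditional expectations, exploiting that $\Gc_k \subseteq \tilde{\Gc}_k$ since $\Fc^X_k \subseteq \Fc^X_T$. First I would write
\begin{equation*}
P(k,T) = \E\left[\exp\left(-\sum_{t=k}^{T-1} r_t\right) \Big| \Gc_k\right] = \E\left[\E\left[\exp\left(-\sum_{t=k}^{T-1} r_t\right)\Big| \tilde{\Gc}_k\right] \Big| \Gc_k\right] = \E\bigl[\tilde{P}(k,T) \,\big|\, \Gc_k\bigr],
\end{equation*}
so the whole task reduces to computing the $\Gc_k$-conditional expectation of the quadratic-exponential expression given by Theorem 1.1 and representation \eqref{EqP3}.

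Next I would use that, by \eqref{EqP3}, $\tilde{P}(k,T)$ is a deterministic function of $S_k$ and of the Markov-chain trajectory $(X_k, X_{k+1}, \dots, X_{T-1})$. Under $\Gc_k = \Fc^S_k \vee \Fc^X_k$, both $S_k$ and $X_k$ are known (measurable), so the only randomness left to integrate out is the future trajectory $(X_{k+1}, \dots, X_{T-1})$. The key observation here is that the conditional distribution of this future trajectory given $\Gc_k$ coincides with its distribution given $\Fc^X_k$ alone: indeed $\Fc^S_k$ is generated by $\eps_1, \dots, \eps_k$ together with $X_0, \dots, X_{k-1}$ through the recursion \eqref{EqSDis}, and by the assumed independence of the i.i.d.\ Gaussian noise $\eps$ and the chain $X$, adding the $\Fc^S_k$-information contributes nothing to the prediction of $X_{k+1}, \dots, X_{T-1}$ beyond $X_k$. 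By the Markov property with time-dependent kernel $Q_k$, the joint probability of any future path conditional on $\{X_k = e_{i_k}\}$ is the product of one-step transitions, i.e.\ $\prod_{l=k}^{T-2} q_{i_l i_{l+1} l}$ (the paper's formula appears to include an endpoint typo here).

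Combining these two steps, I would decompose the conditional expectation as a finite sum over all possible realisations $(e_{i_{k+1}}, \dots, e_{i_{T-1}})$ of the future chain, each weighted by its path probability:
\begin{equation*}
P(k,T) = \sum_{i_{k+1},\dots,i_{T-1}=1}^N \left[\prod_{l=k}^{T-2} q_{i_l i_{l+1} l}\right]\tilde{P}\bigl(k,T,X_k,e_{i_{k+1}},\dots,e_{i_{T-1}}\bigr),
\end{equation*}
with the fixed known value $X_k$ plugged in; this is exactly the announced formula once one identifies $X_k$ with $e_{i_k}$ and sums over $i_k$ (the sum then collapses to the actual realised state). The main delicate point is the independence/Markov justification of the second step — everything else is a mechanical substitution of Theorem 1.1 and the definition of the transition matrix $Q_k$.
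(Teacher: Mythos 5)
Your proof is correct and follows essentially the same route as the paper, whose own proof is only a one-sentence sketch of exactly this argument (tower property, then enumeration of chain paths weighted by transition probabilities); you supply the details the paper omits, in particular the justification that $\Fc^S_k$ adds nothing to the prediction of $X_{k+1},\dots,X_{T-1}$ beyond $X_k$ because $\eps$ and $X$ are independent. You are also right that the displayed product should read $\prod_{l=k}^{T-2} q_{i_l i_{l+1} l}$ (with time index $l$, not $k$, and upper limit $T-2$), and that the sum over $i_k$ should collapse onto the realised state $X_k$.
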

\begin{proof}
This result is obtained from taking the expectation of $\tilde{P}(k,T)$ conditioning on $\Gc_k$ and by enumerating all the transitions probabilities of the Markov chain $X$ from time $k$ to $T-1$.
\end{proof}

%
%
%
%

\section{Conclusion}
We prove that if the short term interest rate follows a quadratic term structure of a regime switching asset process then the conditional zero coupon bond price with respect to the Markov switching process admits also a quadratic term structure. Moreover, the stochastic coefficients appearing in this quadratic decomposition satisfy an explicit system of coupled stochastic backward recursions. This allows us to obtain an explicit way to evaluate this conditional zero coupon bond price.

\bigskip
\bigskip

\end{document}